\documentclass{ifacconf}

\usepackage{graphicx}      
\usepackage{natbib}        
\usepackage{bm}
\usepackage{array}
\usepackage{amsmath,amssymb,amsfonts}
\usepackage{comment}
\usepackage[dvipsnames]{xcolor}
\def\RevAll#1{\textcolor{black}{#1}}
\begin{document}
\begin{frontmatter}

\title{Event-triggered parameter estimator \\for sensor fusion}

\thanks[footnoteinfo]{This work was supported in part by the Secretaría de Ciencia, Humanidades, Tecnología e Innovación (SECIHTI), México, previously administered by the Consejo Nacional de Humanidades, Ciencias y Tecnologías (CONAHCYT) with grant number 1229622, and in part by projects PID2021-124137OB-I00 and 
PID2024-159279OB-I00 funded by MICIU/AEI/10.13039/501100011033 and by ERDF/EU, by  
project REMAIN S1/1.1/E0111 (Interreg Sudoe Programme, ERDF), and via project DGA T45\_23R (Gobierno de Aragón). Grant reference BG24/00121 funded by MICIU/AEI/10.13039/501100011033. {\color{red} \copyright 2026 IFAC. This work has been accepted to IFAC for publication under a Creative Commons Licence CC-BY-NC-ND. Accepted for presentation at the 23rd IFAC World Congress 2026.}}

\author[First,Second]{Ariana Méndez-Castillo} 
\author[Second]{Irene Perez-Salesa}
\author[Second]{Rodrigo Aldana-López} 
\author[First]{Antonio Ramírez-Treviño}
\author[Second]{Rosario Aragues}

\address[First]{Department of Electrical Engineering, Cinvestav-Guadalajara, Jalisco, 45019 México (e-mail: ariana.mendez@cinvestav.mx, antonio.ramirezt@cinvestav.mx)}
\address[Second]{Department of Computing and Systems Engineering - I3A, University of Zaragoza, Zaragoza 50009 España (e-mail: i.perez@unizar.es, raldana@unizar.es, raragues@unizar.es ) }

\begin{abstract}                
This paper studies event-triggered parameter estimation in sensor fusion systems where sensors transmit measurements to a gradient based estimator. We introduce a regressor-driven local triggering rule that requires no knowledge of the current parameter estimate and depends solely on the regressor signals. Under a persistent excitation condition on the aggregate regressor, we derive explicit design inequalities on the estimator gain and event thresholds that guarantee global exponential convergence. The analysis is based on a time-varying Lyapunov function. We further provide a sufficient condition on the regressor dynamics that enforces a uniform lower bound on inter-event times, excluding Zeno behavior. Simulations show substantial communication savings while preserving exponential convergence.
\end{abstract}

\begin{keyword}
Parameter estimator, event-triggered mechanism, sensor fusion, gradient estimator.
\end{keyword}

\end{frontmatter}

\section{Introduction}
Parameter estimation is a data regression problem appearing in adaptive control, sensor networks, and unmanned surface vehicles \citep{Shen2024}. Precise knowledge of the unknown parameters is required to guarantee stability and closed loop performance in several applications \citep{olfati2007consensus}. In this article, we address the problem of estimating unknown parameters from measurements collected by a network of sensors, while aiming to reduce the overall use of the communication network.

Several estimation algorithms exist in the literature. A standard approach relies on modeling the unknown parameters as constant states of an underlying dynamical system. Therefore, one may apply a discrete-time Kalman filter \citep{kalman1960new, maybeck1982stochastic} or a continuous-time Kalman–Bucy filter \citep{golovan1994kalman}. In both cases, the covariance of the estimation error evolves according to a Riccati equation, which guarantees asymptotic convergence of the estimation error. To obtain exponential convergence, different estimators have been introduced, most notably gradient-based schemes \citep{ioannou1996robust, sastry2011adaptive}. For this class of estimators, exponential convergence of the parameter estimates to their true values is guaranteed under the standard persistent excitation (PE) condition \citep{anderson1997exponential}. However, classical parameter estimation methods are primarily formulated for settings in which sensors and estimators can interact instantaneously.

In contrast, sensor networks, where sensing units and processing units are often spatially distributed, arise in many modern applications. In such networks, the information available at any individual sensor is often insufficient to estimate the unknown parameters of interest. As a result, the measurements collected across the network are transmitted to an estimator and processed collectively, a strategy commonly referred to as sensor fusion \citep{SASIADEK2002203}. This approach is a core component in the design of autonomous vehicles and related systems \citep{Kocic2018}.

When parameter estimation relies on a sensor network, reducing communication and computational load becomes essential. Event-triggering mechanisms (ETMs) are commonly used for this purpose, since they restrict transmissions to times when a triggering rule is met. Several classes of ETMs have been introduced in the literature, mostly in contexts such as state estimation or control. A standard example is the send on delta mechanism \citep{Miskowicz2014}, where a measurement is transmitted only when it deviates from the last transmitted value by more than a fixed threshold. This rule depends solely on local measurements. In \citep{DIAO201895}, a send on delta strategy combined with recursive stochastic approximation yields asymptotic convergence of the estimation error, but the analysis is restricted to binary valued measurements. Other ETMs include stochastic triggering rules \citep{Shen2024}, where transmission occurs when the deviation between successive measurements exceeds a random threshold. This setup considers joint state and parameter estimation by augmenting the parameter vector into the state and applying a Kalman filter, which guarantees only asymptotic convergence. Dynamic ETMs, such as those in \citep{Perez-Salesa2025}, further reduce transmissions by comparing previously received information with a dynamic threshold and exploiting negative information between events. Event-triggered parameter estimators have also been proposed \citep{Geng2024}, typically using consensus plus innovation schemes with adaptive triggers based on prediction errors. These rules require each sensor to evaluate the predicted output, which in turn requires real-time access to the current parameter estimate. In general, these methods are designed for state estimation and rely on a Kalman filter structure, and thus achieve asymptotic stability. 

Motivated by this discussion, the contributions of this work are as follows.
\begin{itemize}
    \item We propose an event-triggered gradient estimator for sensor fusion, where each sensor uses a local trigger based on changes in its regressor and the fusion unit runs a continuous–time gradient update with piecewise constant aggregated data.

    \item We derive explicit conditions on the estimator gain and local thresholds that ensure global exponential convergence under persistent excitation of the aggregate regressor, using a time–varying Lyapunov function analysis.

    \item We provide a sufficient condition on the regressor dynamics that rules out Zeno behavior.
    
    \item We present numerical experiments showing that exponential convergence is preserved with a substantial reduction in sensor transmissions.
\end{itemize}

\section{Problem Statement}

We consider a sensor network with $N$ agents. Each agent $i \in \{1,\dots,N\}$ collects measurements
$$
\mathbf{y}_i(t) = \mathbf{C}_i(t)\, \bm{\theta},
$$
where $\mathbf{y}_i(t) \in \mathbb{R}^{p_i}$ is the measurement vector, $\mathbf{C}_i(t) \in \mathbb{R}^{p_i \times n}$ is a time-varying regressor, and $\bm{\theta} \in \mathbb{R}^n$ is the unknown parameter vector of interest.

Local estimation of $\bm{\theta}$ is often infeasible because an individual regressor may not have enough richness or satisfy persistent excitation conditions. Moreover, individual agents may have limited computational capabilities to process all collective information, particularly for large $N$. Therefore, agents forward information to an aggregator node, which computes a global estimate of $\bm{\theta}$ using all data received up to time $t \ge 0$.

To reduce network load, we adopt event-triggered communication so that transmissions occur only when a prescribed triggering condition is met. The goal is to design an estimator together with triggering mechanisms that ensure exponential convergence towards the origin for the estimation error, under standard persistent excitation assumptions on the aggregate regressor across all sensors. The overall architecture is illustrated in Fig.~\ref{fig:system:model}.
\begin{figure}
    \centering
    \includegraphics[width=0.8\linewidth]{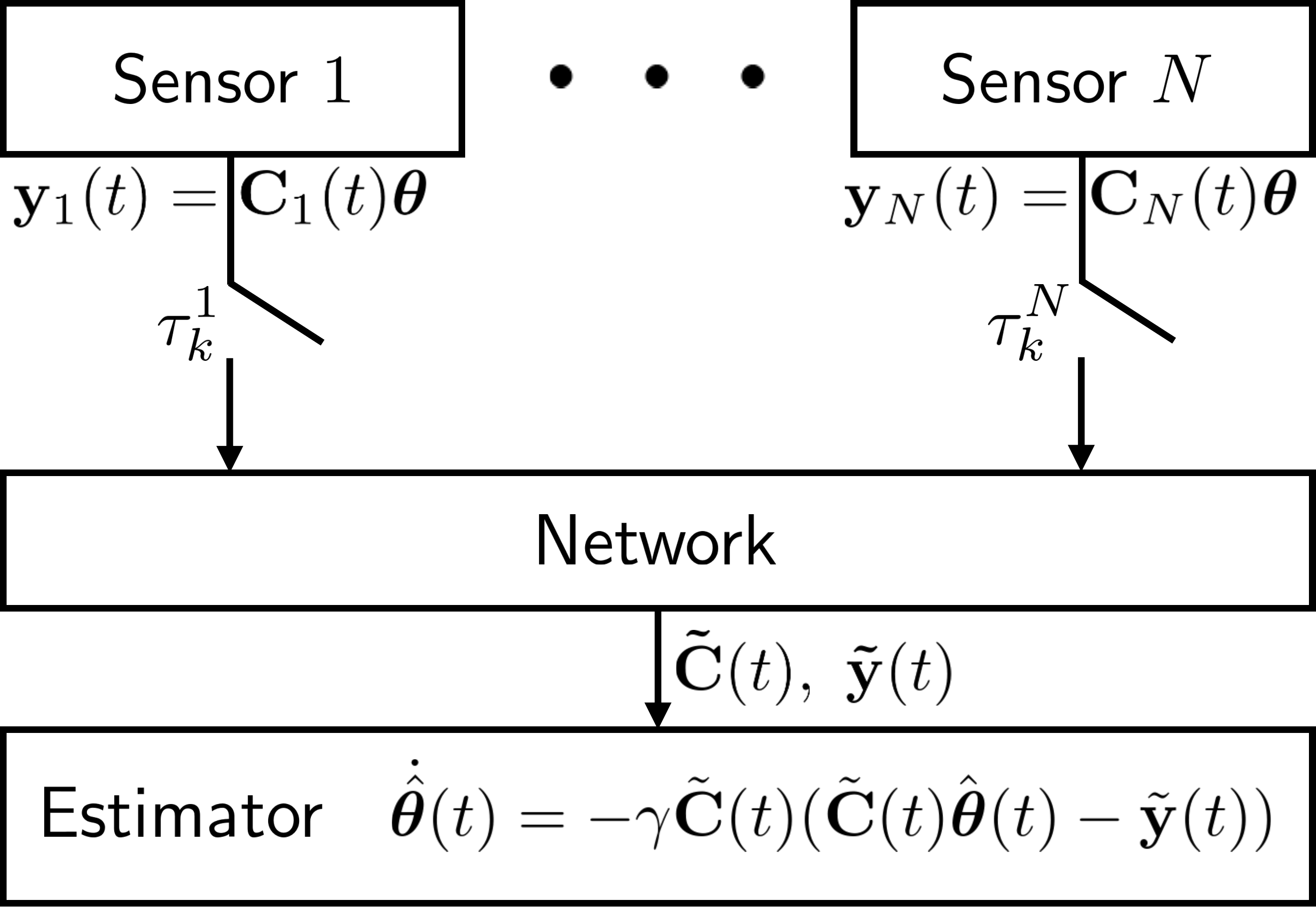}
    \caption{Parameter estimator using event-triggered sensor fusion. Spatially distributed sensors collect measurements and transmit them only when a local triggering rule is satisfied. An estimator node receives the transmitted data and computes the estimate of the unknown parameters using all aggregated measurements.}
    \label{fig:system:model}
\end{figure}

\section{Main results}

We propose the following local event-triggering mechanism and  estimator algorithm as follows:
\begin{subequations}\label{eq:event_trigger_estimator}
\begin{align}
    &\textbf{Local event-triggering mechanism:} \nonumber\\
    &\tau_{k+1}^i 
      = \sup\Big\{ t \ge \tau_k^i : \nonumber\\
    &\hspace{5em}
    \big\| \mathbf{C}_i(\tau_k^i)^\top \mathbf{C}_i(\tau_k^i)
        - \mathbf{C}_i(t)^\top \mathbf{C}_i(t) \big\|
        \ge \overline{\Delta}_i \Big\}
      \label{eq:local_trigger}
    \\
    &\textbf{Information from agent $i$ used by the estimator:} \nonumber\\
    &\!\begin{array}{lcll}
    \tilde{\mathbf{C}}_i(t) &=&
     \mathbf{C}_i(\tau_k^i), &\text{ for } t \in [\tau_k^i,\, \tau_{k+1}^i)
      \\[0.2em]
    \tilde{\mathbf{y}}_i(t) &=&
        \mathbf{y}_i(\tau_k^i) &\text{ for } t \in [\tau_k^i,\, \tau_{k+1}^i)
    \end{array}
      \label{eq:received_info}
    \\
    &\textbf{Estimator:} \nonumber\\
    &\dot{\hat{\bm{\theta}}}(t)
      = -\, {\gamma}\,
        \tilde{\mathbf{C}}(t)^\top
        \big( \tilde{\mathbf{C}}(t)\hat{\bm{\theta}}(t)
        - \tilde{\mathbf{y}}(t) \big)
      \label{eq:estimator}
\end{align}
\end{subequations}
where $\tau_k^i$ denotes the $k$-th triggering instant of sensor $i$, at which the sensor transmits 
$\mathbf{y}_i(\tau_k^i)$ and $\mathbf{C}_i(\tau_k^i)$ to the aggregator node. Consequently, 
$\tilde{\mathbf{y}}_i(t)$ and $\tilde{\mathbf{C}}_i(t)$ represent the most recently received information from sensor $i$, and therefore remain piecewise constant between two successive triggering instants $(\tau_k^i,\, \tau_{k+1}^i)$ for each $i$. The aggregate quantities are defined as
\begin{equation}\label{eq:aggregate-evtr-measurement}
\begin{aligned}
\tilde{\mathbf{C}}(t)
&=
\begin{bmatrix}
\tilde{\mathbf{C}}_1(t)\\
\vdots\\
\tilde{\mathbf{C}}_N(t)
\end{bmatrix},
\qquad
\tilde{\mathbf{y}}(t)
=
\begin{bmatrix}
\tilde{\mathbf{y}}_1(t)\\
\vdots\\
\tilde{\mathbf{y}}_N(t)
\end{bmatrix},
\end{aligned}
\end{equation}
which collect the latest information recovered from all sensors at time $t$ and are used collectively in \eqref{eq:estimator}. Moreover, $\overline{\Delta}_i \geq 0$ and ${\gamma}>0$ are design parameters.

The event-triggering condition in \eqref{eq:local_trigger} depends only on the local exogenous regressor $\mathbf{C}_i(t)$. Hence, the presence of undesirable Zeno behavior, i.e. triggering an infinite number of events in a finite time interval, depends only on the properties of this exogenous input. For the sake of generality in the main results, we will assume the absence of Zeno behavior. However, note that, in most practical applications, it is excluded due to the digital handling of signals, which forces a minimum inter-event time. In any case, in the following we characterize a practical condition under which Zeno behavior is avoided:

\begin{prop}\label{prop:zeno}
    Assume that there exist constants $b_i,c_i>0$ such that $\|\mathbf{C}_i(t)\|\leq b_i$ and $\|\dot{\mathbf{C}}_i(t)\|\leq c_i$ for all $t\geq 0$. Then, for all $i\in\{1,\dots,N\}$, the event instants generated by \eqref{eq:local_trigger} satisfy the condition
    \begin{equation}
    \label{eq:zeno}
    \tau_{k+1}^i- \tau_k^i\geq \underline{\tau},
    \end{equation}
for some $\underline{\tau}>0$, i.e. there exists a minimum inter-event time.
\end{prop}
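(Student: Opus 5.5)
The plan is to bound how fast the triggering quantity $\mathbf{e}_i(t) := \mathbf{C}_i(\tau_k^i)^\top \mathbf{C}_i(\tau_k^i) - \mathbf{C}_i(t)^\top \mathbf{C}_i(t)$ can grow after an event, and then note that it must climb from zero up to $\overline{\Delta}_i$ before the next event. We read \eqref{eq:local_trigger} as selecting the first time after $\tau_k^i$ at which $\|\mathbf{e}_i(t)\|$ reaches $\overline{\Delta}_i$. For the statement to be meaningful we need $\overline{\Delta}_i>0$: if $\overline{\Delta}_i=0$, the mechanism degenerates to continuous transmission. We therefore state this requirement explicitly.

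\textbf{Step 1.} At $t=\tau_k^i$ the error $\mathbf{e}_i$ is reset to zero. For $t\ge \tau_k^i$, write
\begin{equation*}
\mathbf{e}_i(t) = -\int_{\tau_k^i}^{t} \frac{d}{ds}\big(\mathbf{C}_i(s)^\top \mathbf{C}_i(s)\big)\,ds .
\end{equation*}

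\textbf{Step 2.} Differentiate the product:
\begin{equation*}
\frac{d}{ds}\big(\mathbf{C}_i^\top \mathbf{C}_i\big) = \dot{\mathbf{C}}_i^\top \mathbf{C}_i + \mathbf{C}_i^\top \dot{\mathbf{C}}_i .
\end{equation*}
Submultiplicativity and invariance of the norm under transposition (which holds for the induced 2-norm and the Frobenius norm) give
\begin{equation*}
\Big\|\frac{d}{ds}\big(\mathbf{C}_i^\top \mathbf{C}_i\big)\Big\| \le 2 b_i c_i .
\end{equation*}
Hence
\begin{equation*}
\|\mathbf{e}_i(t)\| \le 2 b_i c_i\,(t-\tau_k^i).
\end{equation*}

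\textbf{Step 3.} At the next event, continuity of $\mathbf{C}_i$ (implied by differentiability) gives $\|\mathbf{e}_i(\tau_{k+1}^i)\| \ge \overline{\Delta}_i$. Combining this with Step 2 yields $2b_ic_i(\tau_{k+1}^i-\tau_k^i)\ge \overline{\Delta}_i$. Therefore \eqref{eq:zeno} holds with
\begin{equation*}
\underline{\tau} = \min_i \frac{\overline{\Delta}_i}{2 b_i c_i} > 0 .
\end{equation*}
This bound is uniform in $k$, and taking the minimum over $i$ makes it uniform over the finitely many agents as well.

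\textbf{Main obstacle.} The delicate point is not the estimate but the interpretation of the triggering rule. Written literally with a supremum, \eqref{eq:local_trigger} does not define the first crossing time. We will state that it is meant as the infimum (first crossing) and argue via continuity of $t\mapsto \|\mathbf{e}_i(t)\|$ that the threshold is attained exactly at $\tau_{k+1}^i$. We will also make the requirements $\overline{\Delta}_i>0$ and $\dot{\mathbf{C}}_i$ existing almost everywhere explicit, since the integral representation in Step 1 needs absolute continuity of $\mathbf{C}_i$.
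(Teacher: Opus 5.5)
Your argument is correct and is essentially the paper's: both rest on the bound $\|\tfrac{d}{dt}(\mathbf{C}_i^\top\mathbf{C}_i)\|\le 2b_ic_i$, so the triggering error, which is reset to zero at $\tau_k^i$, cannot reach $\overline{\Delta}_i$ too quickly. The paper argues by contradiction using an equality-form mean value theorem, which does not hold for matrix-valued functions, whereas your direct integral estimate is the rigorous mean-value-inequality form of the same step and gives the explicit uniform bound $\underline{\tau}=\min_i \overline{\Delta}_i/(2b_ic_i)$, twice the bound implicit in the paper's contradiction.
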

\begin{pf}
    Fix any $i\in\{1,\dots,N\}$. Set $\mathbf{M}_i(t)=\mathbf{C}_i(t)^\top \mathbf{C}_i(t)$. Since
$$
\dot{\mathbf{M}}_i(t)
= \dot{\mathbf{C}}_i(t)^\top \mathbf{C}_i(t)+\mathbf{C}_i(t)^\top \dot{\mathbf{C}}_i(t),
$$
we have
$
\|\dot{\mathbf{M}}_i(t)\|
\le 2\|\mathbf{C}_i(t)\|\,\|\dot{\mathbf{C}}_i(t)\|
\le 2 b_i c_i =: m_i
$ for all $t$. We proceed by contradiction. Suppose that there exists no $\underline{\tau}>0$ such that $\tau_{k+1}^i-\tau_k^i\geq \underline{\tau}$. Therefore, there exists $k\geq 0$ such that
$$
0 < \tau_{k+1}^i - \tau_k^i \le \frac{\overline{\Delta}_i}{2 m_i}
$$
where $\overline{\Delta}_i>0$ is the fixed threshold in \eqref{eq:local_trigger}. Define
$$
M_k^i
:= \left\|\frac{\mathbf{M}_i(\tau_{k+1}^i)-\mathbf{M}_i(\tau_k^i)}
{\tau_{k+1}^i-\tau_k^i}\right\|
= \frac{\overline{\Delta}_i}{\tau_{k+1}^i-\tau_k^i}\geq 2m_i,
$$
therefore,
for such $k$. On the other hand, by the mean value theorem, for every $k$ there exists  
$\xi_k^i \in [\tau_k^i,\tau_{k+1}^i]$ such that
$
M_k^i = \|\dot{\mathbf{M}}_i(\xi_k^i)\| \le m_i,
$
which contradicts $M_k^i\ge 2 m_i$. Thus, there exists a minimum inter-event time $\underline{\tau}>0$. \qed

\end{pf}

To analyze the convergence of the proposed estimator, we impose the following standard condition on the aggregate regressor in the absence of event-triggering:

\begin{assum}\label{as:pe}
Let
$$
{\mathbf{C}}(t)
=
\begin{bmatrix}
{\mathbf{C}}_1(t)\\
\vdots\\
{\mathbf{C}}_N(t)
\end{bmatrix}.
$$
There exist constants $m>0$, $b>0$, and $T>0$ such that, for all $t\ge 0$,
\begin{equation}
m\mathbf{I}_n \preceq \int_{t-T}^t \mathbf{C}(\tau)^\top \mathbf{C}(\tau)\, \mathrm{d}\tau,
\qquad
\|\mathbf{C}(t)\| \le b.
\end{equation}
\end{assum}

The main result of this work is stated next. It establishes sufficient conditions on the design parameters $\overline{\Delta}_i$ and $\gamma$ that guarantee exponential convergence of the estimation error to the origin.

\begin{thm}\label{th:trigger}
Let Assumption~\ref{as:pe} and the Proposition \ref{prop:zeno} hold. Define
\begin{equation}\label{eq:bound_const}
\begin{aligned}
\beta &= \frac{\gamma T^2 b^3}{2},\\
k_1 &= \frac{1}{2\gamma}\!\left(\frac{2\beta^2}{m} + T\right),\\
k_2&=k_1+\frac{T^2 b^2}{2}
\end{aligned}
\end{equation}
and impose the design conditions
\RevAll{\begin{equation}\label{eq:design}
\gamma>0,
\qquad
\overline{\Delta}_i \le \frac{m}{4\gamma k_2N}.
\end{equation}}
Then, the origin for the estimation error $\tilde{\bm{\theta}}(t) := \hat{\bm{\theta}}(t) - \bm{\theta}$ is globally exponentially stable under the dynamics defined in
\eqref{eq:event_trigger_estimator}.
\end{thm}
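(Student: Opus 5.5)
We first derive the error dynamics and then run a Lyapunov argument with a time-varying function. Since $\tilde{\mathbf{y}}_i(t)=\mathbf{y}_i(\tau_k^i)=\mathbf{C}_i(\tau_k^i)\bm{\theta}=\tilde{\mathbf{C}}_i(t)\bm{\theta}$, the error obeys the linear time-varying equation
\begin{equation*}
\dot{\tilde{\bm{\theta}}}(t) = -\gamma\,\tilde{\mathbf{C}}(t)^\top\tilde{\mathbf{C}}(t)\,\tilde{\bm{\theta}}(t).
\end{equation*}
The measurement consistency is exact, so there is no perturbation term. The only effect of triggering is that the matrix $\tilde{\mathbf{M}}(t):=\tilde{\mathbf{C}}(t)^\top\tilde{\mathbf{C}}(t)=\sum_i\mathbf{C}_i(\tau_k^i)^\top\mathbf{C}_i(\tau_k^i)$ replaces $\mathbf{M}(t):=\mathbf{C}(t)^\top\mathbf{C}(t)$.

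\textbf{Step 1: the trigger bounds the deviation from $\mathbf{M}(t)$.} By \eqref{eq:local_trigger}, and assuming the absence of Zeno behaviour as granted by Proposition~\ref{prop:zeno}, we have $\|\mathbf{C}_i(\tau_k^i)^\top\mathbf{C}_i(\tau_k^i)-\mathbf{C}_i(t)^\top\mathbf{C}_i(t)\|\le\overline{\Delta}_i$ between events. Summing over $i$ and using \eqref{eq:design} gives
\begin{equation*}
\|\tilde{\mathbf{M}}(t)-\mathbf{M}(t)\|\le\sum_i\overline{\Delta}_i\le \frac{m}{4\gamma k_2}.
\end{equation*}
Hence $\tilde{\mathbf{M}}$ is a small symmetric perturbation of $\mathbf{M}$ that is still positive semidefinite. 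We also get a uniform bound $\|\tilde{\mathbf{M}}\|\le b^2+m/(4\gamma k_2)$, which we may need to control by $b^2$ up to constants.

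\textbf{Step 2: Lyapunov function for the nominal system.} Following the classical PE argument, I would use
\begin{equation*}
V(t,\tilde{\bm\theta})=\tilde{\bm\theta}^\top\mathbf{P}(t)\tilde{\bm\theta},\qquad \mathbf{P}(t)=\frac{1}{2\gamma}\mathbf{I}_n+\int_{t-T}^{t}\!\!\int_{s}^{t}\mathbf{M}(\tau)\,d\tau\,ds .
\end{equation*}
The double integral is at most $\frac{T^2b^2}{2}\mathbf{I}_n$, which matches $k_2-k_1=T^2b^2/2$. The natural target is the sandwich $k_0\|\tilde{\bm\theta}\|^2\le V\le k_2\|\tilde{\bm\theta}\|^2$, with $k_0=1/(2\gamma)$. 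Differentiating the double integral gives $T\mathbf{M}(t)-\int_{t-T}^t\mathbf{M}(\tau)d\tau$, so
\begin{equation*}
\dot V=-\tilde{\bm\theta}^\top\Big(\int_{t-T}^t\mathbf{M}\Big)\tilde{\bm\theta}+T\tilde{\bm\theta}^\top\mathbf{M}\tilde{\bm\theta}-\tilde{\bm\theta}^\top\tilde{\mathbf{M}}\tilde{\bm\theta}-2\gamma\,\tilde{\bm\theta}^\top\Big(\int\!\!\int\mathbf{M}\Big)\tilde{\mathbf{M}}\tilde{\bm\theta}.
\end{equation*}
The first term is $\le -m\|\tilde{\bm\theta}\|^2$ by Assumption~\ref{as:pe}. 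The cross term has size about $\gamma\frac{T^2b^2}{2}\,b\,\|\tilde{\mathbf{C}}\tilde{\bm\theta}\|\,\|\tilde{\bm\theta}\|$. I plan to handle it by Young's inequality so that it costs $\frac{m}{2}\|\tilde{\bm\theta}\|^2+\frac{2\beta^2}{m}\|\tilde{\mathbf C}\tilde{\bm\theta}\|^2$; this is exactly where $\beta=\gamma T^2b^3/2$ and the factor $2\beta^2/m$ in $k_1$ come from. The remaining quadratic terms in $\mathbf{M}$ and $\tilde{\mathbf{M}}$ are then absorbed by writing $T\mathbf{M}=T\tilde{\mathbf{M}}+T(\mathbf{M}-\tilde{\mathbf M})$. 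This is the reason the constant $\frac{1}{2\gamma}$ should be replaced by a weight like $k_1=\frac{1}{2\gamma}(\frac{2\beta^2}{m}+T)$ in front of the identity in $\mathbf{P}$, since $\frac{d}{dt}\|\tilde{\bm\theta}\|^2=-2\gamma\tilde{\bm\theta}^\top\tilde{\mathbf M}\tilde{\bm\theta}$ supplies the negative term that cancels these positive $\tilde{\mathbf{M}}$-quadratic terms.

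\textbf{Step 3: absorb the trigger error and conclude.} After the cancellation, what remains is $-\frac{m}{2}\|\tilde{\bm\theta}\|^2$ plus terms proportional to $\|\tilde{\mathbf M}-\mathbf M\|$. These terms carry coefficients of order $2\gamma k_2$ coming from $\mathbf{P}$, so the bound in Step 1 makes them at most $\frac{m}{4}\|\tilde{\bm\theta}\|^2$. This yields $\dot V\le-\frac{m}{4}\|\tilde{\bm\theta}\|^2\le-\frac{m}{4k_2}V$. Since $V$ is sandwiched between positive multiples of $\|\tilde{\bm\theta}\|^2$, a comparison argument gives global exponential convergence of $\tilde{\bm{\theta}}$. $V$ is continuous in $t$ and $\tilde{\bm\theta}$ is absolutely continuous, so the piecewise-constant jumps of $\tilde{\mathbf C}$ cause no difficulty, given non-Zeno events.

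\textbf{Main obstacle.} The hard part is the bookkeeping in Steps 2–3: choosing the weight in $\mathbf{P}$ and the Young splitting so that the constants come out exactly as $\beta$, $k_1$, $k_2$ and the factor $4\gamma k_2 N$ in \eqref{eq:design}. If the signs do not close directly, my first fallback is to put $\tilde{\mathbf M}$ rather than $\mathbf M$ inside the integral defining $\mathbf P$. PE of $\tilde{\mathbf M}$ then follows from Step 1, since $\int_{t-T}^t\tilde{\mathbf M}\succeq (m-T\sum_i\overline\Delta_i)\mathbf I_n$. That route, however, needs $T\sum_i\overline{\Delta}_i<m$, which would replace the threshold bound with a different constant.
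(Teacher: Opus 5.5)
Your architecture is the paper's. The double integral $\int_{t-T}^{t}\!\int_s^t\mathbf{M}(\tau)\,d\tau\,ds$ equals the paper's $\int_{t-T}^{t}(\tau-t+T)\mathbf{C}(\tau)^\top\mathbf{C}(\tau)\,d\tau$. Your revised weight $k_1$ in front of $\mathbf{I}_n$ gives exactly the paper's $\mathbf{P}(t)$ and the sandwich of Lemma~\ref{lem:Lyap_bounds}. Your error equation $\dot{\tilde{\bm{\theta}}}=-\gamma\tilde{\mathbf M}\tilde{\bm{\theta}}$ is \eqref{eq:error:system} with $\bm{\Delta}=\mathbf M-\tilde{\mathbf M}$.

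The bookkeeping you leave open does not close the way you describe, however. Let $\mathbf{Q}(t)$ denote your double integral. You cannot apply Young's inequality with $\beta=\gamma T^2b^3/2$ to the cross term $2\gamma\tilde{\bm{\theta}}^\top\mathbf{Q}\tilde{\mathbf M}\tilde{\bm{\theta}}$ written in terms of $\tilde{\mathbf C}$. Its bound is $\gamma T^2b^2\|\tilde{\mathbf C}\|\,\|\tilde{\bm{\theta}}\|\,\|\tilde{\mathbf C}\tilde{\bm{\theta}}\|$, and $\tilde{\mathbf C}$ stacks the $\mathbf C_i$ at different event times. So Assumption~\ref{as:pe} gives only $\|\tilde{\mathbf C}\|^2\le b^2+\sum_i\overline{\Delta}_i$, not $\|\tilde{\mathbf C}\|\le b$. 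The best Young cost is then $\frac m2\big(1+\sum_i\overline{\Delta}_i/b^2\big)\|\tilde{\bm{\theta}}\|^2$ rather than $\frac m2\|\tilde{\bm{\theta}}\|^2$.

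The step you are missing, which is what the paper does, is to substitute $\tilde{\mathbf M}=\mathbf M-\bm{\Delta}$ in the whole term $-2\gamma\tilde{\bm{\theta}}^\top\mathbf P\tilde{\mathbf M}\tilde{\bm{\theta}}$ before bounding anything:
\begin{equation*}
\dot V=\big[\tilde{\bm{\theta}}^\top\dot{\mathbf P}\tilde{\bm{\theta}}-2\gamma\,\tilde{\bm{\theta}}^\top\mathbf P\mathbf M\tilde{\bm{\theta}}\big]+2\gamma\,\tilde{\bm{\theta}}^\top\mathbf P\bm{\Delta}\tilde{\bm{\theta}}.
\end{equation*}
The bracket involves only $\mathbf C(t)$, so your Young splitting with $\varepsilon=2\beta/m$ gives $-\frac m2\|\tilde{\bm{\theta}}\|^2$ (Lemma~\ref{lem:bound_g2}). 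Every triggering effect, including your $T(\mathbf M-\tilde{\mathbf M})$ term, is collected in the last term.

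The more serious problem is Step 3, which is off by a factor of two. Under \eqref{eq:design}, $2\gamma\,\tilde{\bm{\theta}}^\top\mathbf P\bm{\Delta}\tilde{\bm{\theta}}\le2\gamma k_2\sum_i\overline{\Delta}_i\|\tilde{\bm{\theta}}\|^2\le\frac m2\|\tilde{\bm{\theta}}\|^2$, not $\frac m4\|\tilde{\bm{\theta}}\|^2$. The argument therefore only gives $\dot V\le-\big(\frac m2-2\gamma k_2\sum_i\overline{\Delta}_i\big)\|\tilde{\bm{\theta}}\|^2$. This rate vanishes when every threshold sits exactly at its bound in \eqref{eq:design}, leaving only $\dot V\le 0$ and no exponential decay.

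To conclude, you need one of two changes. Either make the inequality in \eqref{eq:design} strict (for at least one sensor, with your sum bound), or use the halved threshold $\overline{\Delta}_i\le m/(8\gamma k_2N)$ if you want your $\frac m4$ margin. The paper's proof has the same boundary defect. It defines $\alpha=\frac1{k_2}\big(\frac m2-2\gamma k_2N\overline{\Delta}\big)$ and asserts $\alpha>0$, but the non-strict design condition only yields $\alpha\ge0$.

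Your fallback of putting $\tilde{\mathbf M}$ inside $\mathbf P$ is unnecessary once the substitution above is made.
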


\section{Convergence analysis}

To prove Theorem~\ref{th:trigger}, we introduce several technical lemmas that support the convergence analysis. We begin by rewriting the estimation error dynamics in a form that isolates the effect of the event-triggering mechanism.

\begin{lem}\label{lem:error_dynamics}
Let the conditions of Theorem~\ref{th:trigger} hold. Then the estimation error 
$\tilde{\bm{\theta}}(t) = \hat{\bm{\theta}}(t) - \bm{\theta}$ evolves according to
\begin{equation}\label{eq:error:system}
\dot{\tilde{\bm{\theta}}}(t)
= -\gamma\, \mathbf{C}(t)^\top \mathbf{C}(t)\tilde{\bm{\theta}}(t)
  + \gamma\, \bm{\Delta}(t)\tilde{\bm{\theta}}(t),
\end{equation}
where $\mathbf{C}(t)$ is the aggregate regressor defined in Assumption~\ref{as:pe}, and
\begin{equation}\label{eq:Delta}
\bm{\Delta}(t):= \mathbf{C}(t)^\top \mathbf{C}(t)- \tilde{\mathbf{C}}(t)^\top \tilde{\mathbf{C}}(t),
\end{equation}
with $\tilde{\mathbf{C}}(t)$ defined in \eqref{eq:aggregate-evtr-measurement} according to the received sensor information.
\end{lem}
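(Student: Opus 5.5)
The plan is a direct computation of $\dot{\tilde{\bm{\theta}}}$ from the estimator \eqref{eq:estimator}. The key point is that the stored measurement $\tilde{\mathbf{y}}(t)$ is exactly consistent with the stored regressor $\tilde{\mathbf{C}}(t)$. Since $\bm{\theta}$ is constant, we have $\dot{\tilde{\bm{\theta}}}(t)=\dot{\hat{\bm{\theta}}}(t)$ wherever the right-hand side is defined.

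\textbf{Step 1: consistency of the stored data.} For each agent $i$ and each $t\in[\tau_k^i,\tau_{k+1}^i)$, the definitions in \eqref{eq:received_info} give
\begin{equation*}
\tilde{\mathbf{y}}_i(t)=\mathbf{y}_i(\tau_k^i)=\mathbf{C}_i(\tau_k^i)\bm{\theta}=\tilde{\mathbf{C}}_i(t)\bm{\theta}.
\end{equation*}
Here we use that the measurement model $\mathbf{y}_i=\mathbf{C}_i\bm{\theta}$ holds at every instant, in particular at $\tau_k^i$. Stacking over $i$ as in \eqref{eq:aggregate-evtr-measurement} yields $\tilde{\mathbf{y}}(t)=\tilde{\mathbf{C}}(t)\bm{\theta}$ for all $t\ge0$.

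\textbf{Step 2: substitute into the estimator.} Using Step 1, \eqref{eq:estimator} becomes
\begin{equation*}
\dot{\tilde{\bm{\theta}}}(t)=-\gamma\tilde{\mathbf{C}}(t)^\top\big(\tilde{\mathbf{C}}(t)\hat{\bm{\theta}}(t)-\tilde{\mathbf{C}}(t)\bm{\theta}\big)=-\gamma\tilde{\mathbf{C}}(t)^\top\tilde{\mathbf{C}}(t)\tilde{\bm{\theta}}(t).
\end{equation*}
Now add and subtract $\gamma\mathbf{C}(t)^\top\mathbf{C}(t)\tilde{\bm{\theta}}(t)$. By the definition \eqref{eq:Delta}, this gives $\tilde{\mathbf{C}}(t)^\top\tilde{\mathbf{C}}(t)=\mathbf{C}(t)^\top\mathbf{C}(t)-\bm{\Delta}(t)$, which is exactly \eqref{eq:error:system}.

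\textbf{Step 3: well-posedness (the only subtle point).} The right-hand side of \eqref{eq:error:system} is piecewise continuous in $t$ and linear in $\tilde{\bm{\theta}}$, with jumps only at the triggering instants. The absence of Zeno behavior, guaranteed by Proposition~\ref{prop:zeno}, ensures that only finitely many jumps occur in any bounded interval. Hence solutions exist in the Carath\'eodory sense and are absolutely continuous, and \eqref{eq:error:system} holds for almost every $t$, namely away from the event times. This is the only step requiring care; the algebra itself is routine.
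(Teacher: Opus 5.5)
Your proof is correct and follows essentially the same route as the paper. The paper introduces $\bm{\varepsilon}_c=\tilde{\mathbf{C}}-\mathbf{C}$ and $\bm{\varepsilon}_y=\tilde{\mathbf{y}}-\mathbf{y}$, uses $\bm{\varepsilon}_y=\bm{\varepsilon}_c\bm{\theta}$ (your Step 1), and then expands and regroups to reach \eqref{eq:error:system}; your direct reduction to $-\gamma\tilde{\mathbf{C}}^\top\tilde{\mathbf{C}}\tilde{\bm{\theta}}$ gets there more cleanly, and your well-posedness remark in Step 3 is a sound addition that the paper's lemma does not include.
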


\begin{pf}
Given that the value of the matrices $\tilde{\mathbf{C}}(t)$ and $\tilde{\mathbf{y}}(t)$ from \eqref{eq:aggregate-evtr-measurement} is only updated at event instants, its value remains constant between the occurrence of two events. For the analysis, we define the errors induced by the event-triggered communication as $\bm{\varepsilon}_c (t) = \mathbf{\tilde{C}} (t) - \mathbf{C}(t)$ and $\bm{\varepsilon}_y (t) = \tilde{\mathbf{y}} (t) - \mathbf{y}(t)$ in the interval $t \in (\tau_k, \ \tau_{k+1})$. 
Substituting in the estimator \eqref{eq:estimator},
$$\dot{\hat{\bm{\theta}}}(t)\!
      =\! -\, {\gamma}\,
        \big(\bm{\varepsilon}_c (t) + \mathbf{C}(t)\big)\!^\top\!
        \big( (\bm{\varepsilon}_c (t) + \mathbf{C}(t))\hat{\bm{\theta}}(t)
        - (\bm{\varepsilon}_y (t) + \mathbf{y}(t)) \big).
        $$
        
Considering $\mathbf{y}(t)=\mathbf{C}(t) \bm{\theta}$, $\bm{\varepsilon}_y(t) = \bm{\varepsilon}_c(t)\bm{\theta}$, we have
\begin{align*}
    \dot{\bm{\hat{\theta}}}(t) =& -\gamma \mathbf{C}(t)^\top \mathbf{C}(t) (\bm{\hat{\theta}}(t)-\bm{\theta}) \\
    &- \gamma\Big(\mathbf{C}(t)^\top \bm{\varepsilon}_c(t) \!+\! \bm{\varepsilon}_c(t)^\top \big( \bm{\varepsilon}_c(t) + \mathbf{C}(t)\big) \Big) (\bm{\hat{\theta}}(t)\!-\!\bm{\theta}).
\end{align*}
Defining the estimation error as $\bm{\tilde{\theta}}(t) = \bm{\hat{\theta}}(t) - \bm{\theta}$,
\begin{align*}
    &\dot{\bm{\tilde{\theta}}}(t) = -\gamma \mathbf{C}(t)^\top \mathbf{C}(t) \bm{\tilde{\theta}}(t) \\
    & - \gamma\Big(\mathbf{C}(t)^\top \big(\mathbf{\tilde{C}} (t) - \mathbf{C}(t) \big) + \big(\mathbf{\tilde{C}} (t) - \mathbf{C}(t) \big)^\top \mathbf{\tilde{C}}(t) \Big) \bm{\tilde{\theta}}(t)\\
    &= -\gamma \mathbf{C}(t)^\top\mathbf{C}(t) \bm{\tilde{\theta}}(t)+\gamma \big( \mathbf{C}(t)^\top \mathbf{C}(t) -\mathbf{\tilde{C}}(t)^\top \mathbf{\tilde{C}}(t)\big) \bm{\tilde{\theta}}(t).
\end{align*}
Then, which coincides with \eqref{eq:error:system}, completing the proof. \qed
\end{pf}

Using the expression for $\bm{\Delta}(t)$ in \eqref{eq:Delta}, we next characterize its uniform boundedness imposed by the event-triggering mechanism \eqref{eq:local_trigger}.
\RevAll{\begin{lem}\label{lem:delta_bound}
Let the conditions of Theorem~\ref{th:trigger} hold. Then the disturbance term \eqref{eq:Delta} satisfies
$$\|\bm{\Delta}(t)\| \le N \overline{\Delta},
\qquad \forall\, t \ge 0$$
with $\overline{\Delta}=\max_i \{\overline{\Delta}_i\}$ being the maximum event threshold from \eqref{eq:local_trigger}.
\end{lem}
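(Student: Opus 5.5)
The plan is to exploit the block structure of the aggregate regressor so that $\bm{\Delta}(t)$ splits into one contribution per sensor. Each contribution is then controlled by that sensor's own trigger \eqref{eq:local_trigger}, and the triangle inequality gives the factor $N$. Since $\mathbf{C}(t)$ and $\tilde{\mathbf{C}}(t)$ are vertical stacks of the blocks $\mathbf{C}_i(t)$ and $\tilde{\mathbf{C}}_i(t)$, we have
\begin{equation*}
\mathbf{C}(t)^\top\mathbf{C}(t)=\sum_{i=1}^N \mathbf{C}_i(t)^\top\mathbf{C}_i(t),\qquad
\tilde{\mathbf{C}}(t)^\top\tilde{\mathbf{C}}(t)=\sum_{i=1}^N \tilde{\mathbf{C}}_i(t)^\top\tilde{\mathbf{C}}_i(t).
\end{equation*}
Consequently,
\begin{equation*}
\bm{\Delta}(t)=\sum_{i=1}^N \bm{\Delta}_i(t),\qquad \bm{\Delta}_i(t):=\mathbf{C}_i(t)^\top\mathbf{C}_i(t)-\tilde{\mathbf{C}}_i(t)^\top\tilde{\mathbf{C}}_i(t).
\end{equation*}

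Next, fix $i$ and $t\ge 0$. By the assumed absence of Zeno behavior (Proposition~\ref{prop:zeno}), the event instants $\tau_k^i$ do not accumulate. Hence there is a unique $k$ with $t\in[\tau_k^i,\tau_{k+1}^i)$, and by \eqref{eq:received_info} we have $\tilde{\mathbf{C}}_i(t)=\mathbf{C}_i(\tau_k^i)$. This gives
\begin{equation*}
\|\bm{\Delta}_i(t)\|=\big\|\mathbf{C}_i(\tau_k^i)^\top\mathbf{C}_i(\tau_k^i)-\mathbf{C}_i(t)^\top\mathbf{C}_i(t)\big\|.
\end{equation*}
The triggering rule \eqref{eq:local_trigger} is designed so that this quantity stays at most $\overline{\Delta}_i$ between consecutive events. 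The argument is as follows. The map $s\mapsto \mathbf{C}_i(s)^\top\mathbf{C}_i(s)$ is continuous, since $\dot{\mathbf{C}}_i$ exists. The difference equals $0$ at $s=\tau_k^i$. The next event is generated precisely when the difference reaches $\overline{\Delta}_i$. So for $s\in[\tau_k^i,\tau_{k+1}^i)$ the difference cannot exceed the threshold: if it did, continuity would force it to reach $\overline{\Delta}_i$ at some earlier instant, and that instant would be an event before $\tau_{k+1}^i$. Hence $\|\bm{\Delta}_i(t)\|\le\overline{\Delta}_i\le\overline{\Delta}$.

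The main obstacle is this step. As literally written, \eqref{eq:local_trigger} uses a $\sup$, whereas the intended meaning is the first crossing time, i.e.\ an $\inf$ over the set where the norm reaches $\overline{\Delta}_i$. I would make this interpretation explicit, namely that an event fires as soon as the norm reaches $\overline{\Delta}_i$. With that reading, the bound on $[\tau_k^i,\tau_{k+1}^i)$ follows directly from continuity and the definition of the first crossing time. At the event instant itself the error resets to zero, so the bound also holds there.

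Finally, the triangle inequality gives
\begin{equation*}
\|\bm{\Delta}(t)\|\le\sum_{i=1}^N\|\bm{\Delta}_i(t)\|\le\sum_{i=1}^N\overline{\Delta}_i\le N\overline{\Delta}
\end{equation*}
for all $t\ge 0$.
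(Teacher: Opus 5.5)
Your proof is correct and follows the paper's argument: you split $\bm{\Delta}(t)=\sum_i\bm{\Delta}_i(t)$ using the block structure, bound each $\|\bm{\Delta}_i(t)\|\le\overline{\Delta}_i$ via the local trigger, and finish with the triangle inequality. Reading the $\sup$ in \eqref{eq:local_trigger} as a first-crossing $\inf$ is warranted, since the paper glosses over this; under that reading you do not need continuity of $\mathbf{C}_i$, because by the definition of the infimum every $t\in[\tau_k^i,\tau_{k+1}^i)$ lies outside the triggering set, which directly gives $\|\bm{\Delta}_i(t)\|<\overline{\Delta}_i$.
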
}
\RevAll{\begin{pf}
Note that $\bm{\Delta}(t)$ adds up all $\bm{\Delta}_i(t) = \mathbf{C}_i(t)^\top \mathbf{C}_i(t) - \mathbf{\tilde{C}}_i(t)^\top \mathbf{\tilde{C}}_i(t)$, in the form $\bm{\Delta}(t)=\sum_{i=1}^N \bm{\Delta}_i(t)$. Every $\bm{\Delta}_i(t)$ is bounded as $\|\bm{\Delta}_i (t) \| \leq \overline{\Delta}_i \leq \overline{\Delta}$, as a consequence of the event-triggering mechanism \eqref{eq:local_trigger}. Then, we have
\begin{align*}
    \|\bm{\Delta}(t) \| & = \|\sum_{i=1}^N \bm{\Delta}_i(t)\| \leq \sum_{i=1}^N\| \bm{\Delta}_i(t)\| \leq N \overline{\Delta}
\end{align*} 
completing the proof. \qed
\end{pf}}
Following the methodology presented in \citep{rueda2021strong}, adapted to the event-triggered multi-sensor setting considered here, we consider the Lyapunov function candidate
\begin{equation}\label{eq:lyap}
V(\tilde{\bm{\theta}},t)
= \tilde{\bm{\theta}}^\top \mathbf{P}(t)\tilde{\bm{\theta}},
\end{equation}
where $\mathbf{P}(t)$ is defined as
\begin{equation}\label{eq:P}
\mathbf{P}(t)
= \frac{1}{2\gamma}\!\left(\frac{2\beta^2}{m}+T\right)\mathbf{I}_n
  + \int_{t-T}^{t} (\tau - t + T)\,\mathbf{C}(\tau)^\top\mathbf{C}(\tau)\, \text{d}\tau,
\end{equation}
where $\mathbf{I}_n$ is the identity matrix of dimensions $n\times n$. We first verify that $V$ satisfies the required quadratic bounds for it to be a proper Lyapunov function.

\begin{lem}\label{lem:Lyap_bounds}
Let the conditions of Theorem~\ref{th:trigger} hold. Then the Lyapunov function \eqref{eq:lyap} satisfies
\begin{equation}\label{eq:Lyap_bounds}
k_1 \|\tilde{\bm{\theta}}\|^2
\;\le\;
V(\tilde{\bm{\theta}},t)
\;\le\;
k_2 \|\tilde{\bm{\theta}}\|^2,
\ \ \forall\, t \ge 0, \ \ \forall \tilde{\bm{\theta}}\in\mathbb{R}^n.
\end{equation}
\end{lem}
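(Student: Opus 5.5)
Since $V(\tilde{\bm{\theta}},t)=\tilde{\bm{\theta}}^\top \mathbf{P}(t)\tilde{\bm{\theta}}$, it suffices to show the matrix inequalities $k_1\mathbf{I}_n \preceq \mathbf{P}(t)\preceq k_2\mathbf{I}_n$ for all $t\ge 0$ and then sandwich the quadratic form. I will split $\mathbf{P}(t)$ from \eqref{eq:P} into the constant part $\frac{1}{2\gamma}\big(\frac{2\beta^2}{m}+T\big)\mathbf{I}_n$, which is exactly $k_1\mathbf{I}_n$ by \eqref{eq:bound_const}, and the integral part
$$
\mathbf{Q}(t)=\int_{t-T}^{t}(\tau-t+T)\,\mathbf{C}(\tau)^\top\mathbf{C}(\tau)\,\mathrm{d}\tau .
$$
Both bounds then reduce to controlling $\mathbf{Q}(t)$.

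For the lower bound, I will observe that on the integration interval the weight satisfies $\tau-t+T\in[0,T]$, so it is nonnegative. Each $\mathbf{C}(\tau)^\top\mathbf{C}(\tau)$ is positive semidefinite, so for every $\mathbf{x}$ we have $\mathbf{x}^\top\mathbf{Q}(t)\mathbf{x}=\int_{t-T}^t(\tau-t+T)\|\mathbf{C}(\tau)\mathbf{x}\|^2\mathrm{d}\tau\ge 0$. Hence $\mathbf{P}(t)\succeq k_1\mathbf{I}_n$, which gives $V\ge k_1\|\tilde{\bm{\theta}}\|^2$. The PE part of Assumption~\ref{as:pe} is not needed here.

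For the upper bound, I will use $\|\mathbf{C}(\tau)\|\le b$ from Assumption~\ref{as:pe} to get $\|\mathbf{C}(\tau)\mathbf{x}\|^2\le b^2\|\mathbf{x}\|^2$. Then
$$
\mathbf{x}^\top\mathbf{Q}(t)\mathbf{x}\le b^2\|\mathbf{x}\|^2\int_{t-T}^t(\tau-t+T)\,\mathrm{d}\tau=\frac{T^2b^2}{2}\|\mathbf{x}\|^2,
$$
using the substitution $s=\tau-t+T$, so that the integral becomes $\int_0^T s\,\mathrm{d}s=T^2/2$. Adding the constant part yields $\mathbf{P}(t)\preceq\big(k_1+\frac{T^2b^2}{2}\big)\mathbf{I}_n=k_2\mathbf{I}_n$.

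The only delicate point is that for $t<T$ the integral reaches into negative times. I will handle this as the paper implicitly does: Assumption~\ref{as:pe} is stated "for all $t\ge0$" with the window $[t-T,t]$, so $\mathbf{C}$ is taken to be defined, with the bound $b$, on $[-T,\infty)$. Alternatively, one can set $\mathbf{C}\equiv 0$ for negative times, which only makes $\mathbf{Q}(t)$ smaller and leaves both bounds intact. Beyond this, the argument is a direct computation with no real obstacle.
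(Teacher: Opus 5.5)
Your proposal is correct and takes the same route as the paper: you identify the constant part of $\mathbf{P}(t)$ as $k_1\mathbf{I}_n$, then bound the weighted integral between $\mathbf{0}$ and $\frac{T^2b^2}{2}\mathbf{I}_n$, using the nonnegative weight for the lower bound and $\|\mathbf{C}\|\le b$ for the upper bound. Your version is a bit cleaner than the paper's: it states the matrix inequality $k_1\mathbf{I}_n\preceq\mathbf{P}(t)\preceq k_2\mathbf{I}_n$ where the paper writes $k_1\le\|\mathbf{P}(t)\|\le k_2$, and it addresses the $t<T$ window issue that the paper leaves implicit.
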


\begin{pf}
Considering the Assumption \ref{as:pe}, and solving the integral of the second part of $\mathbf{P}(t)$ in \eqref{eq:P}:
\begin{align*}
    &\int_{t-T}^t\!(\tau\!-\!t\!+\!T) \mathbf{C}(\tau)\!^\top\! \mathbf{C}(\tau) \text{d} \tau  \preceq \! \int_{t-T}^t \! (\tau\!-\!t\!+\!T) \|\mathbf{C}(t)\|^2 \mathbf{I}_n \text{d}\tau\\
    &=\int_{t-T}^t(\tau\!-\!t\!+\!T)b^2 \mathbf{I}_n \text{d}\tau= b^2 \int_0^T s \mathbf{I}_n \text{d}s = \frac{T^2 b^2}{2}.
\end{align*}
For the lower bound, note that $0 \leq (\tau -t+T) \leq T$ for $\tau \in  [t-T,t]$. Then,
\begin{equation}\label{eq:bound_P2}
\frac{T^2 b^2}{2} \mathbf{I}_n \succeq \int_{t-T}^t(\tau-t+T) \mathbf{C}(\tau)^\top \mathbf{C}(\tau) \text{d}\tau\succeq \bm{0}.\end{equation}

Defining $k_1$ and $k_2$ as in \eqref{eq:bound_const}, $\mathbf{P}(t)$ is bounded by $k_1 \leq \|\mathbf{P}(t)\| \leq k_2$ corresponding to \eqref{eq:lyap}, completing the proof. \qed
\end{pf}

We now establish a key differential inequality relating \eqref{eq:lyap}, \eqref{eq:P}, and the nominal stabilizing term in \eqref{eq:error:system}.

\begin{lem}\label{lem:bound_g2}
Let the conditions of Theorem~\ref{th:trigger} hold. Then, for $V(\tilde{\bm{\theta}},t)$ and $\mathbf{P}(t)$ defined in 
\eqref{eq:lyap}–\eqref{eq:P}, the inequality
\begin{equation}\label{eq:bound_g2}
\tilde{\bm{\theta}}^\top \dot{\mathbf{P}}(t)\tilde{\bm{\theta}}
-
2\gamma\, \tilde{\bm{\theta}}^\top \mathbf{P}(t)\mathbf{C}(t)^\top\mathbf{C}(t)\tilde{\bm{\theta}}
\;\le\;
-\frac{m}{2}\|\tilde{\bm{\theta}}\|^2
\end{equation}
holds for all $t\ge 0$, and all $\tilde{\bm{\theta}}\in\mathbb{R}^n$.
\end{lem}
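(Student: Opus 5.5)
The plan is a direct computation: I would write $\mathbf{P}(t)=k_1\mathbf{I}_n+\mathbf{Q}(t)$ with $\mathbf{Q}(t):=\int_{t-T}^{t}(\tau-t+T)\,\mathbf{C}(\tau)^\top\mathbf{C}(\tau)\,\mathrm{d}\tau$, and set $\mathbf{M}(t):=\mathbf{C}(t)^\top\mathbf{C}(t)$. The two sides of \eqref{eq:bound_g2} are then compared term by term. The constant part $k_1$ of $\mathbf{P}(t)$ is chosen exactly so that it cancels an indefinite term coming from $\dot{\mathbf{P}}(t)$, and it also supplies a spare negative term that will absorb the cross term involving $\mathbf{Q}(t)$.

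\textbf{Step 1 (derivative of $\mathbf{P}$).} By the Leibniz rule, the integrand equals $T\,\mathbf{M}(t)$ at the upper limit and vanishes at the lower limit. Differentiating the factor $(\tau-t+T)$ with respect to $t$ contributes $-\int_{t-T}^t \mathbf{M}(\tau)\,\mathrm{d}\tau$. Hence
\begin{equation*}
\dot{\mathbf{P}}(t)=T\,\mathbf{M}(t)-\int_{t-T}^{t}\mathbf{M}(\tau)\,\mathrm{d}\tau\preceq T\,\mathbf{M}(t)-m\,\mathbf{I}_n ,
\end{equation*}
where the inequality uses the persistent excitation bound of Assumption~\ref{as:pe}.

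\textbf{Step 2 (cancellation).} I expand $-2\gamma\tilde{\bm\theta}^\top\mathbf{P}\mathbf{M}\tilde{\bm\theta}=-2\gamma k_1\|\mathbf{C}\tilde{\bm\theta}\|^2-2\gamma\tilde{\bm\theta}^\top\mathbf{Q}\mathbf{M}\tilde{\bm\theta}$. Since $2\gamma k_1=\frac{2\beta^2}{m}+T$, the term $-T\|\mathbf{C}\tilde{\bm\theta}\|^2$ cancels the $T\,\tilde{\bm\theta}^\top\mathbf{M}\tilde{\bm\theta}$ from Step 1. The left-hand side of \eqref{eq:bound_g2} is then at most
\begin{equation*}
-m\|\tilde{\bm\theta}\|^2-\frac{2\beta^2}{m}\|\mathbf{C}\tilde{\bm\theta}\|^2-2\gamma\,\tilde{\bm\theta}^\top\mathbf{Q}(t)\mathbf{C}(t)^\top\mathbf{C}(t)\tilde{\bm\theta}.
\end{equation*}

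\textbf{Step 3 (cross term, the main obstacle).} The product $\mathbf{Q}\mathbf{M}$ is not symmetric and has no sign, so it cannot be dropped. Instead I bound it via Cauchy--Schwarz as $|2\gamma\tilde{\bm\theta}^\top\mathbf{Q}\mathbf{C}^\top(\mathbf{C}\tilde{\bm\theta})|\le 2\gamma\|\mathbf{Q}\|\,\|\mathbf{C}\|\,\|\tilde{\bm\theta}\|\,\|\mathbf{C}\tilde{\bm\theta}\|$. Using \eqref{eq:bound_P2} from the proof of Lemma~\ref{lem:Lyap_bounds}, we have $\|\mathbf{Q}\|\le T^2b^2/2$, and Assumption~\ref{as:pe} gives $\|\mathbf{C}\|\le b$. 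Together these give the bound $2\beta\|\tilde{\bm\theta}\|\,\|\mathbf{C}\tilde{\bm\theta}\|$. Young's inequality then yields
\begin{equation*}
2\beta\|\tilde{\bm\theta}\|\,\|\mathbf{C}\tilde{\bm\theta}\|\le\frac{m}{2}\|\tilde{\bm\theta}\|^2+\frac{2\beta^2}{m}\|\mathbf{C}\tilde{\bm\theta}\|^2 .
\end{equation*}
The second term is exactly absorbed by the $-\frac{2\beta^2}{m}\|\mathbf{C}\tilde{\bm\theta}\|^2$ from Step 2, which explains the choice of $\beta$ and $k_1$ in \eqref{eq:bound_const}. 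What remains is $-\frac{m}{2}\|\tilde{\bm\theta}\|^2$, which gives \eqref{eq:bound_g2} for all $t\ge0$ and all $\tilde{\bm\theta}$.
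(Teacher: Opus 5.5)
Your proof is correct and follows the paper's argument essentially step for step. You differentiate $\mathbf{P}(t)$ via Leibniz and use the constant part $\frac{1}{2\gamma}\big(\frac{2\beta^2}{m}+T\big)\mathbf{I}_n$ to cancel $T\,\mathbf{C}(t)^\top\mathbf{C}(t)$. You then bound the cross term by $2\beta\,\|\tilde{\bm{\theta}}\|\,\|\mathbf{C}(t)\tilde{\bm{\theta}}\|$ using \eqref{eq:bound_P2} and $\|\mathbf{C}(t)\|\le b$, and absorb it with Young's inequality. The only cosmetic difference is that you fix the Young weights directly, whereas the paper keeps a free $\varepsilon$ and then sets $\varepsilon = 2\beta/m$.
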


\begin{pf}
Let, $$\rho(t) := \tilde{\bm{\theta}}^\top \dot{\mathbf{P}}(t)\tilde{\bm{\theta}} - 2 \gamma \tilde{\bm{\theta}}^\top \mathbf{P}(t) \mathbf{C} (t)^\top \mathbf{C}(t) \tilde{\bm{\theta}}. $$ Differentiate $\mathbf{P}(t)$ to obtain
\begin{equation*}
\dot{\mathbf{P}}(t)
= T\, \mathbf{C}(t)^\top \mathbf{C}(t) - \int_{t-T}^{t} \mathbf{C}(\tau)^\top \mathbf{C}(\tau)\, \text{d}\tau .    
\end{equation*}
Replacing in $\rho(t)$:
\begin{align*}
    \rho (t)&= \tilde{\bm{\theta}}^\top 
   \left(T \mathbf{C}(t)^\top \mathbf{C}(t)
      - \int_{t-T}^{t}\mathbf{C}(\tau)^\top \mathbf{C}(\tau)\, \text{d}\tau\right)\tilde{\bm{\theta}}\\
   & - 2\gamma\, \tilde{\bm{\theta}}^\top \mathbf{P}(t) \mathbf{C}(t)^\top \mathbf{C}(t)\tilde{\bm{\theta}}\\
&= \tilde{\bm{\theta}}^\top 
   \bigg[-\int_{t-T}^t \mathbf{C}(\tau)^\top \mathbf{C}(\tau)\, \text{d}\tau
       \\
      & + (T\mathbf{I}_n - 2\gamma \mathbf{P}(t))\, \mathbf{C}(t)^\top \mathbf{C}(t)\bigg]
    \tilde{\bm{\theta}}.
\end{align*}
From the second term, we have
\begin{align*}
    &T\mathbf{I}_n - 2\gamma \mathbf{P}(t) =
 T\mathbf{I}_n
   - 2\gamma\!\left(
     \frac{1}{2\gamma}\!\left(\frac{2 \beta^2}{m}+T\right)\mathbf{I}_n \right.\\
     & \left. + \int_{t-T}^{t}(\tau-t+T)\, \mathbf{C}(\tau)^\top \mathbf{C}(\tau)\, \text{d}\tau \right)\\
& = -\frac{2 \beta^2}{m}\mathbf{I}_n
   - 2\gamma \int_{t-T}^t (\tau-t+T)\, \mathbf{C}(\tau)^\top \mathbf{C}(\tau)\, \text{d}\tau .
\end{align*}
Substituting the expression above in $\rho(t)$ yields
\begin{align*}
    \rho(t) &=  \tilde{\bm{\theta}}^\top 
    \bigg[-\int_{t-T}^t \mathbf{C}(\tau)^\top \mathbf{C}(\tau)\, \text{d}\tau -\bigg(\frac{2 \beta^2}{m}\mathbf{I}_n \\
   & + 2\gamma \int_{t-T}^t (\tau-t+T)\, \mathbf{C}(\tau)^\top \mathbf{C}(\tau)\, \text{d}\tau\bigg)\, \mathbf{C}(t)^\top \mathbf{C}(t)\bigg]
    \tilde{\bm{\theta}}.
\end{align*}
Using the upper bound of the integral, described in  \eqref{eq:bound_P2}, and the bound from Assumption \ref{as:pe},
\begin{align*}
    \rho(t) \leq - m \|\tilde{\bm{\theta}}\|^2
   - \frac{2 \beta^2}{m}\|\mathbf{C}(t)\tilde{\bm{\theta}}\|^2
   + 2\gamma\!\left(\frac{T^2 b^3}{2}\right)
     \|\tilde{\bm{\theta}}\|\, \|\mathbf{C}(t)\tilde{\bm{\theta}}\|.
\end{align*}
Applying Young’s inequality,
$$\|\tilde{\bm{\theta}}\|\,
\|\mathbf{C}(t)\tilde{\bm{\theta}}\|
\le \frac{\|\tilde{\bm{\theta}}\|^2}{2\varepsilon}
  + \frac{\varepsilon}{2}\|\mathbf{C}(t)\tilde{\bm{\theta}}\|^2,$$ which results in  
\begin{align*}
\rho(t)
&\le
\left(\frac{\gamma T^2 b^3}{2\varepsilon}
      - m \right)\|\tilde{\bm{\theta}}\|^2
+
\left(\frac{\gamma T^2 b^3 \varepsilon}{2}
      -\frac{2 \beta^2}{m}\right)
\|\mathbf{C}(t)\tilde{\bm{\theta}}\|^2 .
\end{align*}

Choosing $\varepsilon = \frac{2\beta}{m}$ with $\beta = \frac{\gamma T^2 b^3}{2}$, we reach $\rho(t) \leq -\frac{m}{2}\|\tilde{\bm{\theta}}\|^2,$ completing the proof. \qed
\end{pf}

\subsection{Proof of Theorem ~\ref{th:trigger}}
First note that condition \eqref{th:trigger} ensures that the trajectory $\bm{\theta}(t)$ from \eqref{eq:estimator} exists for all $t\geq 0$. Lemma~\ref{lem:error_dynamics} shows that the estimation error 
$\tilde{\bm{\theta}}(t) = \hat{\bm{\theta}}(t) - \bm{\theta}$ satisfies the perturbed dynamics
\eqref{eq:error:system}, where the perturbation $\bm{\Delta}(t)$ is defined in
\eqref{eq:Delta}. Consider now the Lyapunov function $V(\tilde{\bm{\theta}},t)$ defined in \eqref{eq:lyap},
with $\mathbf{P}(t)$ given by \eqref{eq:P}. Applying the product rule gives
$$
\dot{V}(\tilde{\bm{\theta}}(t),t)
= \tilde{\bm{\theta}}(t)^\top \dot{\mathbf{P}}(t)\tilde{\bm{\theta}}(t)
  + 2\,\tilde{\bm{\theta}}(t)^\top \mathbf{P}(t)\dot{\tilde{\bm{\theta}}}(t).
$$
Using \eqref{eq:error:system} in the second term yields
$$
\begin{aligned}
\dot{V}(\tilde{\bm{\theta}(t)},t)
&=  \tilde{\bm{\theta}}(t)^\top \dot{\mathbf{P}}(t)\tilde{\bm{\theta}}(t)
  - 2\gamma\, \tilde{\bm{\theta}}(t)^\top \mathbf{P}(t)\mathbf{C}(t)^\top\mathbf{C}(t)\tilde{\bm{\theta}}(t)\\&
  + 2\gamma\, \tilde{\bm{\theta}}(t)^\top \mathbf{P}(t)\bm{\Delta}(t)\tilde{\bm{\theta}}(t).
  \end{aligned}
$$
The first two terms exactly match the combination appearing in Lemma~\ref{lem:bound_g2}, implying
$$
\dot{V}(\tilde{\bm{\theta}}(t),t)
\le
-\frac{m}{2}\|\tilde{\bm{\theta}}(t)\|^2
+ 2\gamma\, \tilde{\bm{\theta}}(t)^\top \mathbf{P}(t)\bm{\Delta}(t)\tilde{\bm{\theta}}(t).
$$

Using the bound in Lemma~\ref{lem:delta_bound} together with the upper bound in
\eqref{eq:Lyap_bounds} from Lemma \ref{lem:Lyap_bounds}, we obtain
 \RevAll{$$
\begin{aligned}
2\gamma \tilde{\bm{\theta}}(t)^\top \mathbf{P}(t)\bm{\Delta}(t)\tilde{\bm{\theta}}(t)
&\le 2\gamma\, \|\mathbf{P}(t)\|\|\bm{\Delta}(t)\|\|\tilde{\bm{\theta}}(t)\|^2\\&
\le 2\gamma k_2 N \overline{\Delta}\|\tilde{\bm{\theta}}(t)\|^2.
\end{aligned}
$$}
Hence,
\RevAll{$$
\dot{V}(\tilde{\bm{\theta}}(t),t)
\le
-\left(\frac{m}{2} - 2\gamma k_2 N \overline{\Delta}\right)\|\tilde{\bm{\theta}}(t)\|^2 = -k_2\alpha \|\tilde{\bm{\theta}}(t)\|^2,
$$}
with 
\RevAll{$$
\alpha = \frac{1}{k_2}\left(\frac{m}{2} - 2\gamma k_2 N\overline{\Delta}\right),
$$}
where condition $\alpha>0$ is ensured by the design condition on $\overline{\Delta}_i$ in Theorem~\ref{th:trigger}. Moreover, applying Lemma \ref{lem:Lyap_bounds} again leads to: 
$$
\dot{V}(\tilde{\bm{\theta}}(t),t)
\le -\alpha\, V(\tilde{\bm{\theta}}(t),t),
\qquad \forall\, t\ge 0.
$$
By the standard comparison lemma and a final application of Lemma \ref{lem:Lyap_bounds}, this implies
$$
k_1\|\tilde{\bm{\theta}}(t)\|^2\leq V(\tilde{\bm{\theta}}(t),t)
\le
V(\tilde{\bm{\theta}}(0),0)\exp(-\alpha t),
\quad \forall\, t\ge 0,
$$
obtaining global exponential decay of the estimation error
$\tilde{\bm{\theta}}(t)$ towards the origin. Hence, the equilibrium
$\tilde{\bm{\theta}}=\bm{0}$ of the estimation error dynamics is globally exponentially stable, completing the proof.

\section{Numerical example}
As an illustrative example, consider a network with $N=3$ sensors and one estimator. 
The unknown parameter vector $\bm{\theta} \in \mathbb{R}^n$ with $n=5$ is drawn from a uniform distribution on $[-10,10]^n$. 
For each sensor $i$, the regressor matrix $\mathbf{C}_i(t) \in \mathbb{R}^{p_i \times n}$, with $p_i=2$, has entries of the form 
$A + B \sin(\omega t) + C \cos(\omega t)$, where $\omega, A, B, C$ are sampled from uniform distributions over $[0,3]$. 
From simulations, we obtain $T = 5$, $b = 6.82$, and $m = 19$, which satisfy Assumption~\ref{as:pe}. 
For simplicity, we set $\overline{\Delta}_i=\overline{\Delta}$ for all sensors in the experiments where
\[
\overline{\Delta} := \max_{i \in \{1,\dots,N\}} \overline{\Delta}_i.
\]
Furthermore, we set $\gamma=0.1$.

The resulting trajectories for this scenario are shown in Figure~\ref{fig:example}. 
Each component of the estimation vector $\hat{\bm{\theta}}(t)$ approaches the true parameter vector $\bm{\theta}$, and the estimation error converges to zero. 
This convergence occurs despite the estimator receiving measurements only at discrete event times, illustrated in the bottom panel of the figure, where sensors trigger at different rates according to their local dynamics. 
For this example we set $\overline{\Delta} = 2.8$. 

Next, we analyze the behavior of the algorithm as $\overline{\Delta}$ varies. 
Figure~\ref{fig:var_delta} shows that the number of triggering events decreases significantly as $\overline{\Delta}$ increases, as expected. 
Across all tested values, the estimates still converge to the true parameter vector. 
Convergence times, computed as
$$
\inf\{ t \ge 0 : \|\hat{\bm{\theta}}(t) - \bm{\theta}\| \le 0.5 \},
$$
remain finite and grow with $\overline{\Delta}$, illustrating the tradeoff between convergence speed and communication effort. While larger values of $\overline{\Delta}$ were also tested, the design condition in \eqref{eq:design} guaranteeing exponential stability requires
$
\overline{\Delta} \le 3.2 \times 10^{-4}.
$
\RevAll{$
\overline{\Delta} \le 1.012 \times 10^{-4}.
$}
This highlights the conservativeness of the theoretical bound. In practice, the estimator remains stable and convergent for much larger triggering thresholds. These observations indicate that a less conservative analysis is needed and will be pursued in future work.

\begin{figure}
    \centering
    \includegraphics[width=1 \linewidth]{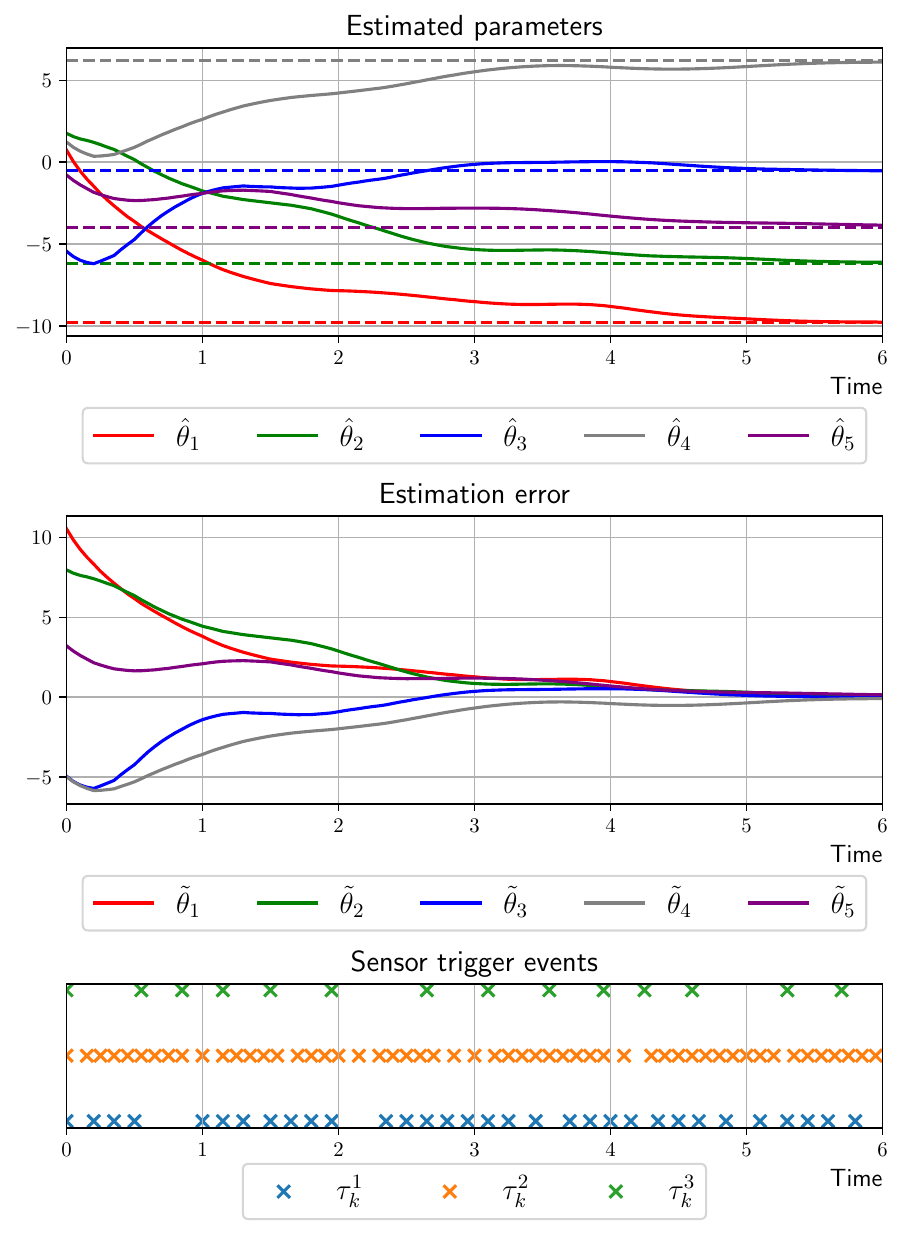}
    \caption{The estimated parameters (top) converge to the actual values, which is verified by the estimation error (middle) converging to zero. Individual event-triggering instants for each sensor are shown using $\gamma=0.1$ and $\overline{\Delta} = 2.8$ (bottom).}
    \label{fig:example}
\end{figure}

\begin{figure}
    \centering
    \includegraphics[width = 1\linewidth]{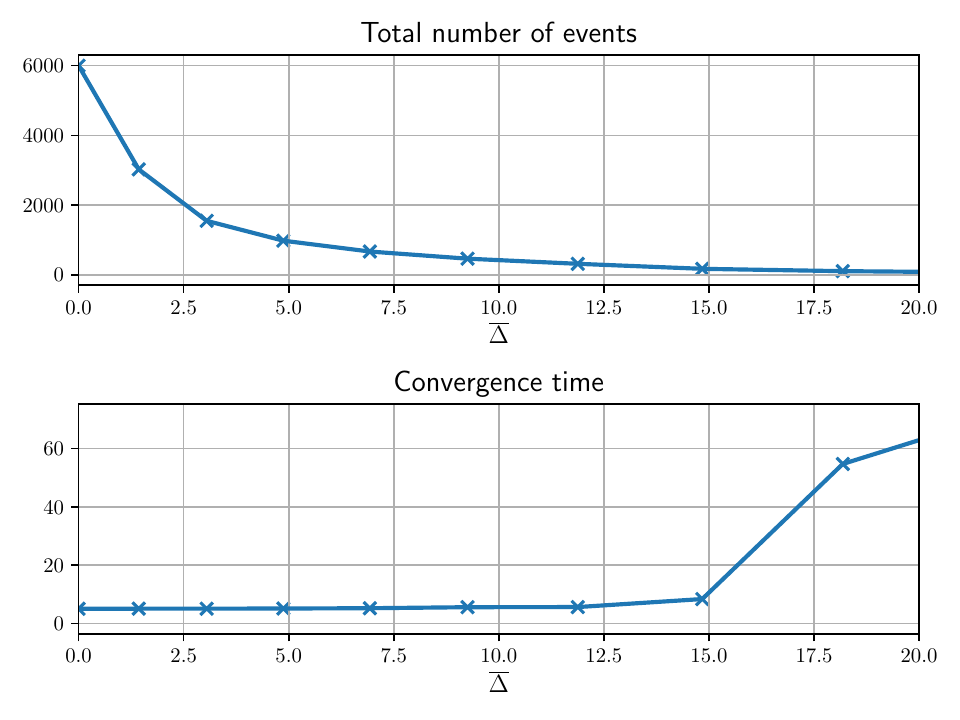}
    \caption{Total number of events (top) and convergence time (bottom) with fixed $\gamma=0.1$ and $\overline{\Delta} \in [0,20]$. The higher the threshold, the lower the total number of events sent (top) and, consequently, the longer the convergence time (bottom).}
    \label{fig:var_delta}
\end{figure}

\section{Conclusion}
This work introduced an event-triggered gradient estimator for sensor fusion, based on a regressor-driven triggering rule that does not require local evaluation of parameter estimates. We established explicit conditions on the estimator gain and triggering thresholds that guarantee global exponential convergence under persistent excitation, and  derived a sufficient condition on the regressor dynamics ensuring a minimum inter-event time. The resulting scheme removes the need for continuous communication while retaining strong stability guarantees. Simulations show a communication–performance tradeoff and demonstrated that significant reductions in transmissions can be achieved without compromising convergence. As future work, we will investigate less conservative conditions for the triggering thresholds to narrow the gap between the theoretical guarantees and the behavior observed in practice.
\bibliography{ifac2bib}
\end{document}